\documentclass[letterpaper,10pt,conference]{ieeeconf}  

\IEEEoverridecommandlockouts                              

\overrideIEEEmargins                                      

\usepackage{color}
\usepackage{float}
\usepackage{multirow}
\usepackage{amsmath,amsthm}
\usepackage{amssymb,eqnarray}
\usepackage{mathabx,mathtools}
\usepackage{subcaption}
\usepackage{stmaryrd}
\usepackage{psfrag}
\usepackage{pstool}

\usepackage{booktabs} 
\usepackage{multirow}   
\usepackage{array, makecell} %
\usepackage{tikz,tikzscale,pgfplots}

\DeclareMathOperator*{\essinf}{ess\,inf}
\DeclareMathOperator*{\esssup}{ess\,sup}

\pagenumbering{gobble}

\makeatletter
\newcommand{\pushright}[1]{\ifmeasuring@#1\else\omit\hfill$\equationstyle#1$\fi\ignorespaces}
\newcommand{\pushleft}[1]{\ifmeasuring@#1\else\omit$\equationstyle#1$\hfill\fi\ignorespaces}
\renewcommand{\sin}{\textrm{s}}
\renewcommand{\cos}{\textrm{c}}
\makeatother

\usepackage{multicol}
\usepackage[noend]{algorithmic}
\usepackage{algorithm2e}
\usepackage[english]{babel}
\usepackage{cite}
\usepackage{mathtools}


\floatstyle{ruled}
\newfloat{algorithm}{tbp}{loa}
\providecommand{\algorithmname}{Algorithm}
\floatname{algorithm}{\protect\algorithmname}

\newtheorem{theorem}{\protect\theoremname}
\newtheorem{proposition}{\protect\propositionname}

\theoremstyle{definition}
\newtheorem{defn}[theorem]{Definition}

\providecommand{\propositionname}{\textbf{Proposition}}
\providecommand{\remarkname}{\textbf{Remark}}
\providecommand{\theoremname}{\textbf{Theorem}}
\providecommand{\lemmaname}{Lemma}
\providecommand{\assumname}{\textbf{Assumption}}
\providecommand{\probname}{\textbf{Problem}}

\newcommand{\drew}[1]{\textcolor{blue}{#1}}

\newcommand{\newsec}[1]{\vspace{0.2cm} \noindent \textbf{#1}}

\newcommand{\defemph}[1]{\emph{\textbf{#1}}}



\title{\LARGE \textbf{Safe Control for Nonlinear Systems with Stochastic Uncertainty via \\ Risk Control Barrier Functions}}
\author{Andrew Singletary, Mohamadreza Ahmadi, and Aaron D. Ames \thanks{The authors are with the Center for Autonomous Systems and Technologies (CAST) at the California Institute of Technology, 1200 E. California Blvd., MC 104-44, Pasadena, CA 91125,  e-mail: \{asinglet, mrahmadi, ames\}@caltech.edu.}}

\pgfplotsset{compat=1.17}
\begin{document}

\maketitle

\begin{abstract}
Guaranteeing safety for robotic and autonomous systems in real-world environments is a challenging task that requires the mitigation of stochastic uncertainties.  
Control barrier functions have, in recent years, been widely used for enforcing safety related set-theoretic properties, such as forward invariance and reachability, of nonlinear dynamical systems. 
In this paper, we extend this rich framework to nonlinear discrete-time systems subject to stochastic uncertainty and propose a framework for assuring risk-sensitive safety in terms of coherent risk measures. 
To this end, we introduce risk control barrier functions (RCBFs), which are compositions of barrier functions and dynamic, coherent risk measures. We show that the existence of such barrier functions implies invariance in a coherent risk sense. Furthermore, we formulate conditions based on finite-time RCBFs to guarantee finite-time reachability to a desired set in the coherent risk. Conditions for risk-sensitive safety and finite-time reachability of sets composed of Boolean compositions of multiple RCBF are also formulated.  We show the efficacy of the proposed method through its application to a cart-pole system in a safety-critical scenario.  
\end{abstract}

\section{Introduction}

Autonomous robotic systems are being increasingly deployed in real-world settings where safety is critical. With this transition to practice, the associated risk that stems from unknown and unforeseen circumstances is correspondingly on the rise~\cite{thrun2005probabilistic}. In the context of safety-critical scenarios, such as those found in aerospace and human-robot applications, it is essential that decision making accounts for risk. These risks are often associated with uncertainty due to extremely intricate nonlinear dynamics, e.g. bipedal robots~\cite{reher2020dynamic}, and/or extreme unstructured environments, e.g. subterranean or extraterrestrial exploration~\cite{rouvcek2019darpa}.

Mathematically speaking, risk can be quantified in numerous ways, such as chance constraints~\cite{ono2015chance,wang2020non}, exponential utility functions~\cite{koenig1994risk}, and distributional robustness~\cite{xu2010distributionally}. However, applications in autonomy and robotics require more ``nuanced assessments of risk''~\cite{majumdar2020should}. Artzner \textit{et. al.}~\cite{artzner1999coherent} characterized a set of natural properties that are desirable for a risk measure, called a coherent risk measure, and  have  obtained widespread
acceptance in finance and operations research, among other fields. An important example of a coherent risk measure is the conditional value-at-risk (CVaR) that has received significant attention in decision making problems, such as Markov decision processes (MDPs)~\cite{chow2015risk,chow2014algorithms,prashanth2014policy,bauerle2011markov}.  For stochastic discrete-time dynamical systems, a model predictive control technique with coherent risk objectives was proposed in~\cite{singh2018framework}, wherein the authors also proposed Lyapunov condition for risk-sensitive exponential stability. Moreover, a method based on stochastic reachability analysis was proposed in~\cite{chapman2019risk} to estimate a CVaR-safe set of initial conditions via the solution to an MDP.

Our approach to risk-sensitive safety is based on a special class of control barrier functions. Control barrier functions were proposed in~\cite{ames2016control} and have been used for designing safe controllers (in
the absence of a legacy controller, i.e., a desired controller that may be unsafe) and safety filters (in
the presence of a legacy controller) for continuous-time dynamical systems, such as bipedal robots~\cite{nguyen20163d}
and trucks~\cite{chen2019enhancing}, with guaranteed robustness~\cite{xu2015robustness,kolathaya2018input} (see the survey~\cite{ames2019control} and references therein). For discrete-time systems, discrete-time barrier functions were formulated in~\cite{ahmadi2019safe,agrawal2017discrete} and applied to the multi-robot coordination problem~\cite{ahmadi2020barrier}. Recently, for a class of stochastic (Ito) differential equations, safety in probability and statistical mean was studied in~\cite{clark2019control,santoyo2019barrier} via stochastic barrier functions.



\begin{figure}[t] \centering{
\includegraphics[width=1\columnwidth]{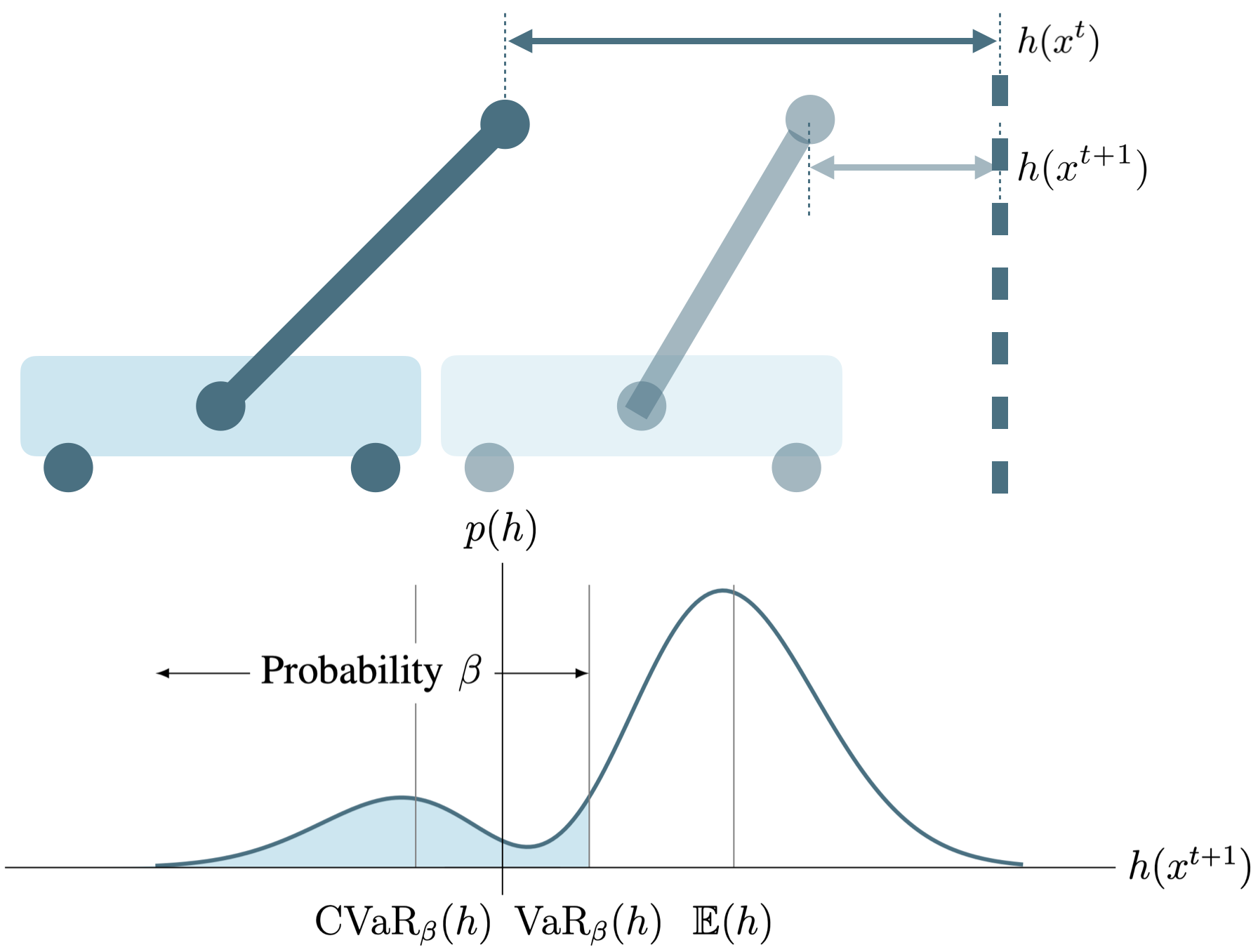}
\caption{The value of the safe-set $h(x^t)$ is known at time $t$, but stochastic uncertainty makes $h(x^{t+1})$ a random variable. We must pick $u^t$ such that $h(x^{t+1})$ is safe subject to a risk measure taken over the worst $\beta$ probability.}\label{fig:introfig}
}
 \end{figure}

This paper goes beyond the conventional notions of safety in probability and statistical mean through the use of coherent risk measures (as motivated in Section \ref{sec:coherent}). To this end, in Section \ref{sec:safetyreachability}, for discrete-time systems subject to stochastic uncertainty, we define safety and finite-time reachability in the risk-sensitive sense, i.e., in the context of the worst possible realizations, via coherent risk measures.  We then propose \emph{risk control barrier functions} (RCBFs) in Section \ref{sec:RCBF}, together with finite-time RCBFs, as a tool to enforce risk-sensitive safety and reachability, respectively.  The main result of this paper establishes that RCBFs ensure safety in a risk sensitive fashion.  Finite-time RCBFs allow for the extension of this result to risk-sensitive reachability.  
Furthermore, for safe and goal sets defined as Boolean compositions of multiple function level-sets, we propose conditions that ensure safety and reachability of these sets based on RCBFs and their finite-time counterparts.
Importantly, in all cases, the risk-sensitive controllers are designed to minimally invasive with respect to a given system legacy controller. We show the efficacy of our approach in Section \ref{sec:experiments} through simulation on a nonlinear cart-pole system (see Figure~\ref{fig:introfig}).


\vspace{0.2cm}
\noindent \textbf{Notation: } We denote by $\mathbb{R}^n$ the $n$-dimensional Euclidean space and $\mathbb{N}_{\ge0}$ the set of non-negative integers.  For a finite set $\mathcal{A}$, we denote by $|\mathcal{A}|$ the number of elements of $\mathcal{A}$. For  a probability space $(\mathcal{X}, \mathcal{F}, \mathbb{P})$ and a constant $p \in [1,\infty)$, $\mathcal{L}_p(\mathcal{X}, \mathcal{F}, \mathbb{P})$ denotes the vector space of real valued random variables $X$ for which $\mathbb{E}|X|^p < \infty$. The Boolean operators are denoted by $\neg$ (negation), $\lor$ (conjunction), and $\land$ (disjunction). For a risk measure or a function $\rho$, we denote $\rho^t$ to show the function composition of $\rho$ with itself $t$ times.

\section{Coherent Risk Measures}
\label{sec:coherent}

The goal of this section is to introduce conditional risk measures with a view toward defining risk control barrier functions in subsequent sections.  
In this context, consider a probability space $(\Omega, \mathcal{F}, \mathbb{P})$, a filtration $\mathcal{F}_0 \subset \cdots \mathcal{F}_N \subset \mathcal{F} $, and an adapted sequence of random variables $h^t,~t=0,\ldots, N$, where $N \in \mathbb{N}_{\ge 0} \cup \{\infty\}$.
For $t=0,\ldots,N$, we further define the spaces $\mathcal{H}_t = \mathcal{L}_p(\Omega, \mathcal{F}_t, \mathbb{P})$, $p \in [0,\infty)$,  $\mathcal{H}_{t:N}=\mathcal{Z}_t\times \cdots \times \mathcal{Z}_N$ and $\mathcal{H}=\mathcal{H}_0\times \mathcal{H}_1 \times \cdots$. We  assume that the sequence $\boldsymbol{h} \in \mathcal{H}$ is almost surely bounded (with exceptions having probability zero), \textit{i.e.}, 
 $
\esssup_t~| h^t(\omega) | < \infty.
$
In order to describe how one can evaluate the risk of sub-sequence $h_t,\ldots, h_N$ from the perspective of stage $t$, we require the following definitions.




\begin{defn}[Conditional Risk Measure]
A mapping $\rho_{t:N}: \mathcal{H}_{t:N} \to \mathcal{H}_{t}$, where $0\le t\le N$, is called a \defemph{conditional risk measure}, if it has the following monotonicity property:
\begin{equation*}
    \rho_{t:N}(\boldsymbol{h}) \le   \rho_{t:N}(\boldsymbol{h}'), \quad \forall \boldsymbol{h}, \forall \boldsymbol{h}' \in \mathcal{H}_{t:N}~\text{such that}~\boldsymbol{h} \preceq \boldsymbol{h}'.
\end{equation*}
A \defemph{dynamic risk measure} is a sequence of conditional risk measures $\rho_{t:N}:\mathcal{H}_{t:N}\to \mathcal{H}_{t}$, $t=0,\ldots,N$.
\end{defn}

One fundamental property of dynamic risk measures is their consistency over time~\cite[Definition 3]{ruszczynski2010risk}. That is, if $h$ will be as good as $h'$ from the perspective of some future time $\theta$, and they are identical between times $\tau$ and $\theta$, then $h$ should not be worse than $h'$ from the perspective at time $\tau$.
If a risk measure is time-consistent, we can define the one-step conditional risk measure $\rho_t:\mathcal{H}_{t}\to \mathcal{H}_{t-1}$, $t=0,\ldots,N-1$ as follows:
\begin{equation}
    \rho_t(h^{t}) = \rho_{t-1,t}(0,h^{t}),
\end{equation}
and for all $t=1,\ldots,N$, we obtain:
\begin{multline}
    \label{eq:dynriskmeasure}
    \rho_{t,N}(h^t,\ldots,h^N)= \rho_t \big(h^t + \rho_{t+1} ( h^{t+1}+\rho_{t+2}(h^{t+2}+\cdots\\
    +\rho_{N-1}\left(h^{N-1}+\rho_{N}(h^N) \right) \cdots )) \big).
\end{multline}
Note that the time-consistent risk measure is completely defined by one-step conditional risk measures $\rho_t$, $t=0,\ldots,N-1$ and, in particular, for $t=0$, \eqref{eq:dynriskmeasure} defines a risk measure of the entire sequence $\boldsymbol{h} \in \mathcal{H}_{0:N}$.
This leads to the notion of a coherent risk measure. 


\begin{defn}[Coherent Risk Measure]\label{defi:coherent}
We call the one-step conditional risk measures $\rho_t: \mathcal{H}_{t+1}\to \mathcal{H}_t$, $t=1,\ldots,N-1$ as in~\eqref{eq:dynriskmeasure} a \defemph{coherent risk measure} if it satisfies the following conditions
\begin{itemize}
    \item \textbf{Convexity:} $\rho_t(\lambda h + (1-\lambda)h') \le \lambda \rho_t(h)+(1-\lambda)\rho_t(h')$, for all $\lambda \in (0,1)$ and all $h,h' \in \mathcal{H}_{t}$;
    \item \textbf{Monotonicity:} If $h\le h'$ then $\rho_t(h) \le \rho_t(h')$ for all $h,h' \in \mathcal{H}_{t}$;
    \item \textbf{Translational Invariance:} $\rho_t(h+h')=c+\rho_t(h')$ for all $h \in \mathcal{H}_{t-1}$ and $h' \in \mathcal{H}_{t}$;
    \item \textbf{Positive Homogeneity:} $\rho_t(\beta h)= \beta \rho_t(h)$ for all $h \in \mathcal{H}_{t}$ and $\beta \ge 0$.
\end{itemize}
\end{defn}

All risk measures studied in this paper are time-consistent coherent risk measures.
Concretely, we briefly review two examples of coherent risk measures. 




\newsec{Total Conditional Expectation:} The simplest risk measure is the total conditional expectation given by
\begin{equation}
    \rho_t(h^{t}) =  \mathbb{E}\left[  h^{t} \mid \mathcal{F}_{t-1}    \right].
\end{equation} 
It is easy to see that total conditional expectation satisfies the properties of a coherent risk measure as outlined in Definition~\ref{defi:coherent}. Unfortunately, total conditional expectation is agnostic to realization fluctuations of the stochastic variable $h$ and is only concerned with the mean value of $h$ at large number of realizations. Thus, it is a risk-neutral measure of performance.

\newsec{Conditional Value-at-Risk:} Let $h \in \mathcal{H}$ be a stochastic variable for which higher values are of interest\footnote{For example, greater values of $h$ indicate safer performance as will be discussed in the next section.}. For a given confidence level $\beta \in (0,1)$, value-at-risk ($\mathrm{VaR}_{\beta}$) denotes the $\beta$-quantile value of a stochastic  variable $h \in \mathcal{H}$ described as 
$
\mathrm{VaR}_{\beta}(h) = \sup_{\zeta \in \mathbb{R}} \{ \zeta \mid \mathbb{P}(h \le \zeta ) \le \beta \}.
$ 
Unfortunately, working with VaR  for non-normal stochastic variables is numerically unstable, optimizing models involving  VaR are intractable in
high dimensions, and VaR ignores the values of $h$ with probability less than $\beta$~\cite{rockafellar2000optimization}. 

In contrast, CVaR overcomes the shortcomings of VaR. CVaR with confidence level $\beta \in (0,1)$ denoted $\mathrm{CVaR}_{\beta}$ measures the expected loss in the $\beta$-tail given that the particular threshold $\mathrm{VaR}_{\beta}$ has been crossed, i.e., $\mathrm{CVaR}_{\beta} (h) =  \mathbb{E}\left[ h \mid h \le \mathrm{VaR}_{\beta}(h)  \right]$. An optimization formulation for CVaR was proposed in~\cite{rockafellar2000optimization} that we use in this paper. That is, $\mathrm{CVaR}_{\beta}$ is given by 
\begin{align}
    \mathrm{CVaR}_{\beta}(h):&=-\inf_{\zeta \in \mathbb{R}}\mathbb{E}\left[\zeta + \frac{(-h-\zeta)_{+}}{\beta}\right]. \label{eq:cvardual}
\end{align}
Note that the above formulation of CVaR is concerned with the left-tail of distributions (higher values of $h$ are preferred).

A value of $\beta \to 1$ corresponds to a risk-neutral case, i.e.,  $\mathrm{CVaR_1}(h)=\mathbb{E}(h)$; whereas, a value of $\beta \to 0$ is rather a risk-averse case, i.e., $\mathrm{CVaR_0}(h)=\mathrm{VaR}_0(h)= \essinf(h)$~\cite{rockafellar2002conditional}. Figure~\ref{fig:introfig} illustrates these notions for an example $h$ variable with distribution $p(h)$.

\section{Risk-Sensitive Safety and Reachability}
\label{sec:safetyreachability}

We assume the robot dynamics of interest is described by a discrete-time stochastic system  given by
\begin{equation}\label{eq:dynamics}
    x^{t+1} = f(x^t,u^t,w^t), \quad x^0=x_0,
\end{equation}
where $t \in \mathbb{N}_{\ge 0}$ denotes the time index, $x \in \mathcal{X} \subset \mathbb{R}^n$ is the state, $u \in \mathcal{U} \subset \mathbb{R}^m$ is the control input,  $w \in \mathcal{W}$ is the stochastic uncertainty/disturbance, and the function $f:  \mathbb{R}^n \times \mathcal{U} \times \mathcal{W} \to \mathbb{R}^n$. We assume that the initial condition $x_0$ is deterministic and that $|\mathcal{W}|$ is finite, \textit{i.e.,} $\mathcal{W} = \{w_1, \ldots, w_{|\mathcal{W}|}\}$. At every time-step $t$,
for a state-control pair $(x^t, u^t)$, the process disturbance $w^t$ is
drawn from set $\mathcal{W}$ according to the probability mass function $p(w) = [p(w_1),\ldots,p(w_{|\mathcal{W}|})]^T$, where $p(w_i):=\mathbb{P}(w^t=w_i)$, $i=1,2,\ldots,|\mathcal{W}|$. Note that the
probability mass function for the process disturbance is time-invariant, and that the process disturbance is independent of
the process history and of the state-control pair $(x^t, u^t)$.

Note that, in particular, system~\eqref{eq:dynamics} can capture stochastic hybrid systems, such as Markovian Jump Systems~\cite{zhao2019brief}.

We are interested in studying the properties of the solutions to~\eqref{eq:dynamics} with respect to the compact set $\mathcal{S}$ described by: 
\begin{align}
\label{eq:safeset}
\mathcal{S} :=\{ x \in \mathcal{X} \mid h(x) \ge 0 \}, \nonumber\\
\mathrm{Int}(\mathcal{S}) :=\{ x \in \mathcal{X} \mid h(x) > 0 \}, \\
\partial \mathcal{S} :=\{ x \in \mathcal{X} \mid h(x) = 0 \}, \nonumber
\end{align}
where $h:\mathcal{X} \to \mathbb{R}$ is a continuous function.

In the presence of stochastic uncertainty $w$, assuring almost sure (with probability one) invariance or safety may not be feasible. Moreover, enforcing safety in expectation is only meaningful if the law of large numbers  can  be  invoked  and  we  are  interested  in  the  long term  performance,  independent  of  the  realization  fluctuations. In this work, instead, we propose safety in the dynamic coherent risk measure sense with conditional expectation as an special case.

\begin{defn} [$\rho$-Safety]
Given a safe set $\mathcal{S}$ as given in~\eqref{eq:safeset} and a time-consistent, dynamic coherent risk measure $\rho_{0:t}$ as described in~\eqref{eq:dynriskmeasure}, we call the solutions to~\eqref{eq:dynamics}, starting at $x_0 \in \mathcal{S}$, \defemph{$\rho$-safe} if and only if 
\begin{align}\label{eq:risksafety}
    \rho_{0,t}\left(0,0,\ldots, h(x) \right) \ge 0, \quad \forall t \in \mathbb{N}_{\ge 0}.
\end{align}
\end{defn}

In order to understand \eqref{eq:risksafety}, consider the case where $\rho$ is the conventional total expectation. Then, \eqref{eq:risksafety} implies safety in expectation. As mentioned earlier, the definition of safety for general coherent risk measures goes beyond the traditional total expectation. 


Another interesting property we study in this paper arises when $x_0 \in \mathcal{X}\setminus \mathcal{S}$. That is, when instead of safety, we are interested in reaching a set of interest in finite time.

\begin{defn}[$\rho$-Reachability]
Consider system~\eqref{eq:dynamics} with initial condition $x_0 \in \mathcal{X} \setminus \mathcal{S}$. Given a set $\mathcal{S}$ as given in~\eqref{eq:safeset} and a time-consistent, dynamic coherent risk measure $\rho_{0:t}$ as described in~\eqref{eq:dynriskmeasure}, we call the set $\mathcal{S}$ \defemph{$\rho$-reachable},  if and only if there exists a constant $t^*$ such that
\begin{equation}\label{eq:ftrisksafety}
    \rho_{0,t^*}\left(0,0,\ldots, h(x) \right) \ge 0.
\end{equation}
\end{defn}

\section{Risk Control Barrier Functions}
\label{sec:RCBF}

In order to check and enforce risk sensitive safety, i.e., $\rho$-safety, we introduce \emph{risk control barrier functions}.  We then extend these to a finite-time variation, which allows us to establish risk-sensitive reachability, i.e., $\rho$-reachability.

\subsection{Risk Sensitive Safety with RCBFs}

\begin{defn}[Risk Control Barrier Function]\label{def:riskbf}
For~the discrete-time system~\eqref{eq:dynamics} and a dynamic coherent risk measure $\rho$, the continuous function $h : \mathbb{R}^n \to \mathbb{R}$ is a \defemph{risk control barrier
function} for the set $\mathcal{S}$ as defined in~\eqref{eq:safeset}, if there exists a convex $\alpha \in \mathcal{K}$ satisfying $\alpha(r) < r$ for all $r>0$  such that
\begin{equation}\label{eq:BFinequality}
   \rho( h(x^{t+1})) \ge  \alpha( h(x^{t})),\quad \forall x^t \in \mathcal{X}.
    \end{equation}
    \end{defn}

Note that a simple choice for the function $\alpha$ is $\alpha=\alpha_0$, where $\alpha_0 \in (0,1)$ is a constant.

In the first main contribution of the paper, we demonstrate that the existence of an RCBF implies invariance/safety in the coherent risk measure.

\begin{theorem}\label{thm:riskbf}
Consider the discrete-time system~\eqref{eq:dynamics} and the set $\mathcal{S}$ as described in~\eqref{eq:safeset}. Let $\rho$ be a given coherent risk measure. Then, $\mathcal{S}$ is $\rho$-safe if there exists an RCBF as defined in Definition~\ref{def:riskbf}.
\end{theorem}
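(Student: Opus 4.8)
The plan is to reduce $\rho$-safety to a scalar recursion along the trajectory and close it by induction on $t$. Put $V^t := \rho_{0,t}(0,\ldots,0,h(x^t))$, so that the assertion of the theorem is exactly $V^t \ge 0$ for all $t \in \mathbb{N}_{\ge 0}$. I would first unfold the nested representation~\eqref{eq:dynriskmeasure}: since every intermediate stage cost is $0$, translational invariance together with $\rho(0)=0$ (from positive homogeneity) collapses $V^t$ to $\Phi_t(h(x^t))$, where $\Phi_t$ is the iterated composition of one-step measures that~\eqref{eq:dynriskmeasure} applies to the terminal argument. The same unfolding records the two facts needed below: $V^0 = h(x_0) \ge 0$, because $x_0$ is a deterministic point of $\mathcal{S}$, and $\Phi_{t+1} = \Phi_t\circ\rho$ (one extra zero is absorbed by the innermost one-step measure).

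The heart of the proof is the recursion $V^{t+1}\ge\alpha(V^t)$. Applying the RCBF inequality~\eqref{eq:BFinequality} at time $t$ gives $\rho(h(x^{t+1}))\ge\alpha(h(x^t))$ as $\mathcal{F}_t$-measurable random variables; monotonicity of each one-step coherent measure, hence of $\Phi_t$, then yields $V^{t+1}=\Phi_t\big(\rho(h(x^{t+1}))\big)\ge\Phi_t\big(\alpha(h(x^t))\big)$. It remains to commute $\Phi_t$ past $\alpha$, i.e.\ to prove $\Phi_t(\alpha(Y))\ge\alpha(\Phi_t(Y))$ for any admissible $Y$; this is the only place convexity of $\alpha$ enters. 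Since $\alpha$ is convex, nondecreasing and $\alpha(0)=0$ (understood as an extended-$\mathcal{K}$ function, so that it is defined for the possibly-negative values of $h$), it equals the pointwise supremum of its affine minorants $y\mapsto a_i y+b_i$ with slopes $a_i\ge 0$. For each such minorant, monotonicity, positive homogeneity ($a_i\ge 0$) and translational invariance of the coherent maps composing $\Phi_t$ give $\Phi_t(\alpha(Y))\ge\Phi_t(a_i Y+b_i)=a_i\Phi_t(Y)+b_i$, and taking the supremum over $i$ recovers $\alpha(\Phi_t(Y))$. Specializing to $Y=h(x^t)$ gives exactly $V^{t+1}\ge\alpha(V^t)$.

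To finish, I would run the induction: $V^0\ge 0$ from the first paragraph, and if $V^t\ge 0$ then $V^{t+1}\ge\alpha(V^t)\ge\alpha(0)=0$ since $\alpha\in\mathcal{K}$ is nondecreasing and vanishes at the origin. Hence $V^t\ge 0$ for every $t$, which is precisely $\rho$-safety of $\mathcal{S}$.

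The step I expect to be the main obstacle is the commutation $\Phi_t(\alpha(\cdot))\ge\alpha(\Phi_t(\cdot))$ — a Jensen-type inequality for coherent risk measures — and the bookkeeping that makes it survive the composition, applied layer by layer to $\Phi_t$. It is here, not in the monotonicity used elsewhere, that the full coherence package (convexity plus positive homogeneity, i.e.\ sublinearity, together with translational invariance) is genuinely needed. In the common special case $\alpha(r)=\alpha_0 r$ with $\alpha_0\in(0,1)$ the step is immediate from a single use of positive homogeneity. A more routine point to track is that the almost-sure boundedness hypothesis on $\boldsymbol{h}$ and continuity of $h$ keep every random variable encountered in the relevant $\mathcal{L}_p$ space, so that the one-step measures are well defined at each stage.
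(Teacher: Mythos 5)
Your proof is correct and follows essentially the same route as the paper: an induction that peels off one one-step risk measure at a time, combining monotonicity of $\rho$ with a Jensen-type commutation $\rho(\alpha(Y))\ge\alpha(\rho(Y))$ for the convex class-$\mathcal{K}$ function $\alpha$, and then using $h(x_0)\ge 0$. The only differences are minor: you prove the Jensen step yourself via affine minorants (using only monotonicity, positive homogeneity and translational invariance, where the paper cites Theorem~3 of~\cite{chen2013risk}), and you induct on the invariant $V^t\ge 0$ rather than carrying the sharper bound $\rho^{t}(h(x^t))\ge\alpha^{t}(h(x_0))$ that the paper records in~\eqref{eq:expdecay} and later reuses for its asymptotic-convergence remark.
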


\begin{proof}
The proof is carried out by induction and using the properties of a coherent risk measure as outlined in Definition~\ref{defi:coherent}. If~\eqref{eq:BFinequality} holds, for $t=0$, we have 
\begin{equation}\label{ewqwq}
\rho(h(x^1))\ge \alpha ( h(x_0)).
\end{equation}
Similarly, for $t=1$, we have 
\begin{equation}\label{sdsdd}
\rho(h(x^2))\ge \alpha ( h(x\drew{^1})).
\end{equation}
Since $\rho$ is monotone, composing both sides of~\eqref{sdsdd} with $\rho$ does not change the inequality and we obtain
\begin{equation} \label{eq:bnndds}
\rho \circ \rho(h(x^2))\ge  \rho (\alpha( h(x^1))).
\end{equation}
Since $\alpha$ is a convex function, from Theorem 3 in~\cite{chen2013risk} (Jensen's Inequality for coherent risk measures), we obtain\footnote{In particular, if $\alpha \in (0,1)$ is a constant, from positive homogeneity property of $\rho$, we have
\begin{equation*}
\rho \circ \rho(h(x^2))\ge  \rho (\alpha h(x^1))=\alpha \rho ( h(x^1)).
\end{equation*}} 
\begin{equation*}
\rho \circ \rho(h(x^2))\ge  \rho (\alpha( h(x^1)))\ge \alpha( \rho ( h(x^1))).
\end{equation*}

Then, using inequality~\eqref{ewqwq}, we have 
\begin{equation*}\label{sdsdd2}
\rho \circ \rho(h(x^2))\ge  \alpha ( \rho (h(x^1))) \ge \alpha \circ \alpha( h(x_0)).
\end{equation*}
Therefore, by induction, at time $t$, we can show that
 $
 \rho^{t} (h(x^t)) \ge {\alpha}^{t} (h(x_0)).
$ 
The left-hand side of the above inequality is equal to $\rho_{0,t}(0,\ldots,h(x^t))$. Hence, 
\begin{equation}\label{eq:expdecay}
\rho_{0,t}(0,\ldots,h(x^t))\ge {\alpha}^{t}( h(x_0)).
\end{equation}
If $x_0 \in \mathcal{S}$, from the definition of the set $\mathcal{S}$, we have $h(x_0)\ge0$. Since $\alpha \in \mathcal{K}$, then we can infer that~\eqref{eq:risksafety} holds. Thus, the system is $\rho$-safe.
\end{proof}
\vspace{0.2cm}


Note that, in the case when $x_0\in \mathcal{X} \setminus \mathcal{S}$, the existence of an RCBF implies asymptotic convergence to the set $\mathcal{S}$ in the coherent risk measure $\rho$. This can be inferred from~\eqref{eq:expdecay}.  In fact, if
$\alpha(r) < r$, then there exist a constant $\delta \in (0, 1)$ such
that $\alpha (r) \le \delta r$ and hence \begin{equation}\label{sdscs} {\alpha}^{t}(r) \le \delta^t r, \quad t \in \mathbb{N}_{\ge0},\end{equation}
If $x_0\in \mathcal{X} \setminus \mathcal{S}$,
then $h(x_0)<0$. However, from~\eqref{sdscs}, as $t\to \infty$, $\alpha \circ \cdots \circ \alpha(r) \to 0$, since the compositions of class $\mathcal{K}$ functions is also class $\kappa$ (hence non-negative). We then obtain $\rho_{0,t}(0,\ldots,h(x^t))\ge 0$, which implies that the solutions become $\rho$-safe.

\subsection{Risk Sensitive Safety with Finite-time RCBFs}

In practice, we are often interested in satisfying system specifications characterized by the set $\mathcal{S}$ in finite time. To this end, we define finite-time RCBFs.

\begin{defn}[Finite-Time RCBF] \label{def:ftdtbf}
For the discrete-time system~\eqref{eq:dynamics} and a dynamic coherent risk measure $\rho$, the  continuous function ${h}:\mathcal{X} \to \mathbb{R}$ is a \defemph{finite-time RCBF} for the set $\mathcal{S}$ as defined in~\eqref{eq:safeset}, if there exist constants  $0<\gamma<1$ and $\varepsilon > 0$ such that
\begin{equation}\label{eq:BFft}
    \rho({h}(x^{t+1}))-\gamma {h}(x^{t}) \ge \varepsilon (1-\gamma),\quad \forall {  x^t \in \mathcal{X}}.
\end{equation}
\end{defn}


In the second key contribution of the paper, we show that the existence of a finite-time RCBF implies $\rho$-reachability.

\begin{theorem} \label{thm:FTDTBF}
Consider the discrete-time system~\eqref{eq:dynamics} and a dynamic coherent risk measure $\rho$. Let $\mathcal{S} \subset \mathcal{X}$ be as described in~\eqref{eq:safeset}. If there exists a finite-time RCBF ${h}:\mathcal{X}\to \mathbb{R}$ as in Definition~\ref{def:ftdtbf}, then for all $x^0 \in \mathcal{X}\setminus \mathcal{S}$, there exists a $t^* \in \mathbb{N}_{\ge 0}$ such that $\mathcal{S}$ is $\rho$-reachable, \textit{i.e.,} inequality~\eqref{eq:ftrisksafety} holds. Furthermore, 
\begin{equation} \label{eq:upperboundtheorem2}
    t^* \le {\log\left(\frac{\varepsilon - {h}(x^0)}{\varepsilon}\right)}/{\log\left(\frac{1}{\gamma}\right)},
    \end{equation}
    where the constants $\gamma$ and $\varepsilon$ are as defined in Definition~\ref{def:ftdtbf}.
\end{theorem}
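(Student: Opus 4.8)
The plan is to reproduce the inductive scheme from the proof of Theorem~\ref{thm:riskbf}, except that the contraction $\alpha^{t}$ is replaced by the iteration of an affine map; in particular, no convexity or Jensen-type inequality will be needed, only translational invariance and positive homogeneity. Writing the finite-time RCBF condition~\eqref{eq:BFft} as $\rho(h(x^{t+1})) \ge \gamma\, h(x^{t}) + \varepsilon(1-\gamma)$, I would first prove by induction on $t$ that
\begin{equation*}
\rho^{t}\big(h(x^{t})\big) \;\ge\; \gamma^{t} h(x^{0}) + \varepsilon(1-\gamma)\sum_{k=0}^{t-1}\gamma^{k} \;=\; \varepsilon - \gamma^{t}\big(\varepsilon - h(x^{0})\big),
\end{equation*}
where the last equality just collapses the geometric sum. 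The base case $t=1$ is exactly~\eqref{eq:BFft} evaluated at $t=0$.

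For the inductive step I would apply the $t$-fold composition $\rho^{t}$ to both sides of~\eqref{eq:BFft} taken at time $t$; each one-step risk measure being monotone (Definition~\ref{defi:coherent}), the inequality is preserved, giving $\rho^{t+1}\big(h(x^{t+1})\big) \ge \rho^{t}\big(\gamma h(x^{t}) + \varepsilon(1-\gamma)\big)$. The crux is to commute $\rho^{t}$ with the affine argument: translational invariance lets the deterministic constant $\varepsilon(1-\gamma)$ pass unchanged through each application of $\rho$, while positive homogeneity (using $\gamma>0$) pulls out the factor $\gamma$, so that $\rho^{t}\big(\gamma h(x^{t}) + \varepsilon(1-\gamma)\big) = \gamma\,\rho^{t}\big(h(x^{t})\big) + \varepsilon(1-\gamma)$; plugging in the inductive hypothesis and simplifying closes the induction. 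Finally, exactly as in the proof of Theorem~\ref{thm:riskbf}, translational invariance applied to the zero intermediate stage costs identifies $\rho^{t}(h(x^{t}))$ with $\rho_{0,t}(0,\ldots,0,h(x^{t}))$, so the displayed inequality is a lower bound on the quantity appearing in the $\rho$-reachability condition~\eqref{eq:ftrisksafety}.

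It then remains to pick $t^{*}$. Because $x^{0}\in\mathcal{X}\setminus\mathcal{S}$ forces $h(x^{0})<0$, we have $\varepsilon - h(x^{0}) > \varepsilon > 0$, and the lower bound $\varepsilon - \gamma^{t}(\varepsilon - h(x^{0}))$ is nonnegative precisely when $\gamma^{t} \le \varepsilon/(\varepsilon - h(x^{0}))$; taking logarithms and dividing by $\log\gamma<0$, which reverses the inequality, gives the stated bound~\eqref{eq:upperboundtheorem2} on the time $t^{*}$ at which~\eqref{eq:ftrisksafety} is guaranteed. I expect the only delicate point to be the bookkeeping of how translational invariance and positive homogeneity interact across the $t$ nested applications of $\rho$ acting on the affine term; the monotonicity-under-composition step and the sign care in the final logarithmic manipulation are routine, and, as noted, no analogue of the Jensen step used in Theorem~\ref{thm:riskbf} is needed.
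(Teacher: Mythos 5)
Your proposal is correct and follows essentially the same route as the paper's proof: the same induction using monotonicity, translational invariance, and positive homogeneity of $\rho$, yielding the bound $\rho_{0,t}(0,\ldots,h(x^t)) \ge \gamma^{t}\bigl(h(x^0)-\varepsilon\bigr)+\varepsilon$ (your affine form with the geometric sum is algebraically identical to the paper's shifted form $\rho(h(x^{t+1}))-\varepsilon \ge \gamma(h(x^t)-\varepsilon)$), followed by the same logarithmic manipulation for $t^*$. If anything, your final step---asking directly when the lower bound becomes nonnegative rather than substituting $h(x^{t^*})=0$ as the paper does---is a slightly cleaner way to extract \eqref{eq:upperboundtheorem2}.
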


\begin{proof}
Similar to the proof of Theorem 1, we use induction and properties of coherent risk measures. We prove by induction. From~\eqref{eq:BFft}, we have 
 $
\rho({h}(x^{t+1}))-\varepsilon \ge \gamma {h}(x^t) - \gamma \varepsilon = \gamma \left(  {h}(x^t) -    \varepsilon      \right).
$ 
Hence, for $t=0$, we have  
\begin{equation}\label{xcwsaaa}
\rho(h(x^1))-\varepsilon \ge \gamma (h(x_0)-\varepsilon).
\end{equation}
For $t=1$, we have 
\begin{equation}
\rho(h(x^2))-\varepsilon \ge \gamma (h(x^1)-\varepsilon).
\end{equation}
Since $\rho$ is monotone, composing both sides of the above inequality with $\rho$ does not change the inequality and we obtain
$$
\rho \circ \rho (h(x^2)-\varepsilon) \ge \rho (\gamma (h(x^1)-\varepsilon)) = \gamma \rho  (h(x^1)-\varepsilon),
$$
where in the last equality we used the positive homogeneity property of $\rho$ since $\gamma \in (0,1)$. Since $\varepsilon>0$ is a constant, translational invariance property of $\rho$ yields
$$
\rho \circ \rho (h(x^2))-\varepsilon \ge  \gamma (\rho  (h(x^1))  -\varepsilon).
$$
Moreover, from inequality~\eqref{xcwsaaa}, we infer
$$
\rho \circ \rho (h(x^2))-\varepsilon \ge  \gamma (\rho  (h(x^1))  -\varepsilon) \ge \gamma^2 (h(x_0)  -\varepsilon).
$$
Thus, by induction, we see that at time step $t$, the following inequality holds
\begin{equation*}
\rho^{t} (h(x^t))-\varepsilon \ge \gamma^t (h(x_0)  -\varepsilon).
\end{equation*}
Taking $\varepsilon$ to the right-hand side and noting that the left-hand side of the above inequality is equal to $\rho_{0,t}(0,\ldots,h(x^t))$, we have the following inequality 
\begin{equation} \label{cxcxvcvcvdss}
\rho_{0,t}(0,\ldots,h(x^t)) \ge \gamma^t (h(x_0)  -\varepsilon) + \varepsilon.
\end{equation}

 Since $0<\gamma<1$ and $x^0 \in \mathcal{X}\setminus \mathcal{S}$, \textit{i.e.,} ${h}(x^0)<0$, as $t$ increases $x^t$ approaches $\mathcal{S}$ in the dynamic risk measure $\rho_{0,t}$, because by definition $h(x^t)\ge 0$ implies $x^t \in \mathcal{S}$. Hence,  $\mathcal{S}$ is $\rho$-reachable in finite time.
 
 by definition, $x^t$ reaches $\mathcal{S}$ at least at the boundary by $t^*$ when $\tilde{h}(x^t)=0$. Substituting $\tilde{h}(x^t)=0$ in~\eqref{cxcxvcvcvdss} yields
 \begin{equation} \label{cxcxvcvcvdss2}
0 \ge \gamma^{t^*} (h(x_0)  -\varepsilon) + \varepsilon,
\end{equation}
 where we used the fact that $\rho_{0,t}(0,\ldots,h(x^{t^*}))=\rho_{0,t}(0,\ldots,0)=0$. Re-arranging the term and noting that $h(x_0)\le 0$ and therefore $h(x_0)-\varepsilon \le 0$, we obtain
 $$
 \frac{\varepsilon }{\varepsilon - {h}(x_0)} \le \gamma^t.
 $$ 
 Taking the logarithm of both sides of the above inequality  gives
 $
\log\left(\frac{ \varepsilon}{\varepsilon -{h}(x_0)     }\right) \le t\log(\gamma),
$ 
or equivalently 
$$
-\log\left(  \frac{\varepsilon-{h}(x_0) }{    \varepsilon}\right) \le -t \log(\frac{1}{\gamma}).
$$
  Since $0<\gamma<1$, $\log(\frac{1}{\gamma})$ is a positive number. Dividing both sides of the inequality above with the negative number $-\log(\frac{1}{\gamma})$ obtains 
$
  t \le {\log\left(\frac{\varepsilon - \tilde{h}(b^0)}{\varepsilon}\right)}/{\log\left(\frac{1}{\rho}\right)}.
$  
\end{proof}
\vspace{0.2cm}

The upper bound described by inequality~\eqref{eq:upperboundtheorem2} in Theorem 2 is dependent on the two parameter $\gamma$ and $\varepsilon$. In our experiments, we often fix $0<\gamma<1$ and carry out a line search over $\varepsilon$ until the finite-time RCBF condition~\eqref{eq:BFft} does not hold anymore. Then, we pick the corresponding $t^*$ as the upper-bound on the earliest time the solutions can enter the goal set $\mathcal{S}$.

\begin{figure*}
    \centering
    \begin{subfigure}{.99\textwidth}
    \includegraphics[trim={7.5cm 0 7.5cm 0},clip,width=\columnwidth]{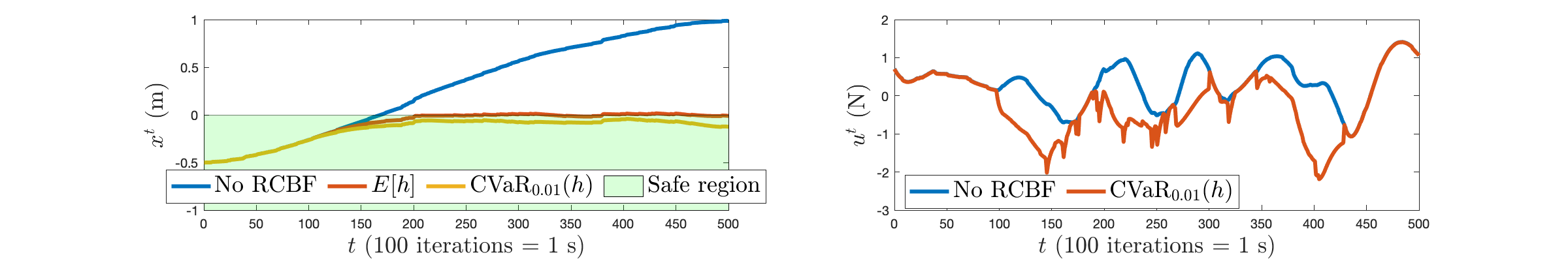}
    \end{subfigure}
    \begin{subfigure}{.99\textwidth}
    \includegraphics[trim={7.5cm  0 7.5cm 0},clip,width=\columnwidth]{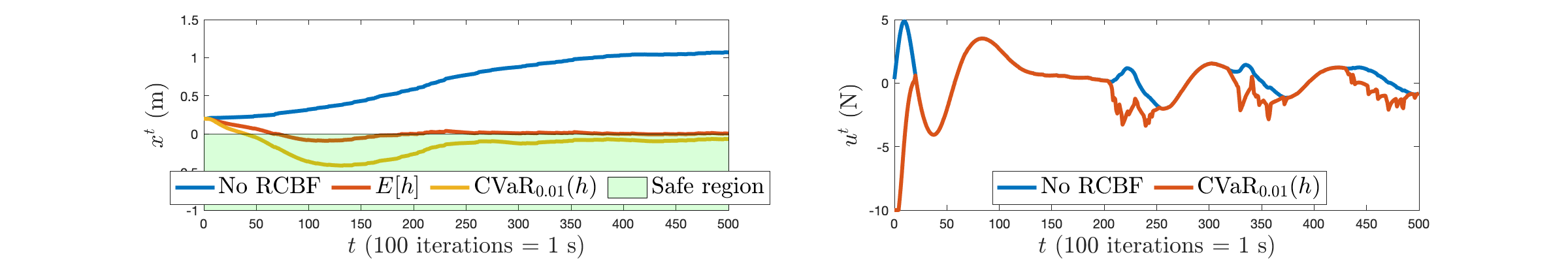}
    \end{subfigure}
    \caption{Simulation results for the cart-pole system with no RCBF filter, and with standard RCBF (top) and finite-time RCBF (bottom) filters using total conditional expectation and CVaR. }\label{fig:simfig}
\end{figure*}

\subsection{Boolean Compositions of RCBFs}
\label{sec:boolean}

We have proposed RCBFs and finite-time RCBFs as means to verify $\rho$-safety and $\rho$-reachability, respectively. 
We now propose conditions for verifying $\rho$-safety and $\rho$-reachability for Boolean compositions of several control barrier functions
~\cite{glotfelter2017nonsmooth,ahmadi2020barrier,ahmadi2020safe}. 
We omit proofs due to space constraints.
\begin{proposition} \label{prop:booleanBFs}
Let $\mathcal{S}_i = \{ x \in \mathbb{R}^n \mid h_i(x)\ge 0\}$, $i=1,\ldots,k$ denote a family of safe sets with the boundaries and interior defined analogous to $\mathcal{S}$ in~\eqref{eq:safeset} and $\rho$ be a given dynamic coherent risk measure. Consider the discrete-time system~\eqref{eq:dynamics}. If there exist a $\alpha \in (0,1)$ such that
\begin{equation}\label{eq:disjBF1}
 \rho\left(\min_{i=1,\ldots,k} h_i(x^{t+1})\right)  \ge  \alpha\min_{i=1,\ldots,k}  h_i(x^{t})
\end{equation}
then the set $\{ x \in \mathbb{R}^n \mid  \land_{i =1,\ldots,k} \left(h_i(x) \ge 0\right)\}$ is $\rho$-safe. Similarly, if there exist a $\alpha \in (0,1)$  such that
\begin{equation}\label{eq:disjBF2}
 \rho\left(\max_{i=1,\ldots,k} h_i(x^{t+1})\right) \ge  \alpha\max_{i=1,\ldots,k}  h_i(x^{t})
\end{equation}
then the set $\{ x  \in \mathbb{R}^n \mid  \lor_{i =1,\ldots,k} \left(h_i(x) \ge 0\right)\}$ is $\rho$-safe.
\end{proposition}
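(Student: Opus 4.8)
The plan is to reduce both claims to a single application of Theorem~\ref{thm:riskbf} by exhibiting an explicit RCBF for each composite set. For the conjunction, note the level-set identity $\{x \in \mathbb{R}^n \mid \land_{i=1,\ldots,k}(h_i(x)\ge 0)\} = \{x \in \mathbb{R}^n \mid \min_{i=1,\ldots,k} h_i(x) \ge 0\}$, so the conjunction set is exactly the zero-superlevel set of $h_\land := \min_{i=1,\ldots,k} h_i$. Since each $h_i$ is continuous and the index set is finite, $h_\land$ is continuous, hence it fits the template of~\eqref{eq:safeset}. Likewise, $\{x \in \mathbb{R}^n \mid \lor_{i=1,\ldots,k}(h_i(x)\ge 0)\} = \{x \in \mathbb{R}^n \mid \max_{i=1,\ldots,k} h_i(x) \ge 0\}$ is the zero-superlevel set of the continuous function $h_\lor := \max_{i=1,\ldots,k} h_i$.

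Next I would verify the RCBF inequality~\eqref{eq:BFinequality} for each of $h_\land$ and $h_\lor$. In both cases take $\alpha(r) := \alpha r$ with the constant $\alpha \in (0,1)$ supplied by the hypothesis; this map is linear (hence convex), belongs to class $\mathcal{K}$, and satisfies $\alpha(r) < r$ for all $r>0$, so it is an admissible choice in Definition~\ref{def:riskbf}. Hypothesis~\eqref{eq:disjBF1} reads $\rho(h_\land(x^{t+1})) \ge \alpha(h_\land(x^t))$ for all $x^t \in \mathcal{X}$, which is precisely~\eqref{eq:BFinequality} for $h_\land$; similarly~\eqref{eq:disjBF2} is~\eqref{eq:BFinequality} for $h_\lor$. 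Invoking Theorem~\ref{thm:riskbf} with $h = h_\land$ (resp. $h = h_\lor$) then yields that the zero-superlevel set of $h_\land$ (resp. $h_\lor$) is $\rho$-safe, i.e., $\rho_{0,t}(0,\ldots,h_\land(x^t)) \ge 0$ for all $t \in \mathbb{N}_{\ge 0}$ whenever $x_0$ lies in that set — which is exactly the asserted $\rho$-safety of the conjunction (resp. disjunction) set.

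The only point requiring care — and it is bookkeeping rather than a genuine obstacle — is matching hypotheses at the interface with Theorem~\ref{thm:riskbf}: that theorem needs the RCBF inequality to hold for all $x^t \in \mathcal{X}$, which~\eqref{eq:disjBF1}–\eqref{eq:disjBF2} already supply, and it needs the initial condition inside the set, which is built into the notion of $\rho$-safety for the composite sets. No properties of $\rho$ beyond those already used in the proof of Theorem~\ref{thm:riskbf} are needed — in fact, since $\alpha$ is linear, monotonicity and positive homogeneity suffice and the appeal to Jensen's inequality is not even required — so nothing new has to be established. As an alternative, one could reprove the result directly by the same induction as in Theorem~\ref{thm:riskbf}, carrying $h_\land$ or $h_\lor$ through the recursion~\eqref{eq:dynriskmeasure}, but the reduction above is cleaner and avoids repetition.
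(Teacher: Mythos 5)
The paper omits its own proof of this proposition, so there is nothing to compare line by line; your reduction is correct: $h_{\land}=\min_{i}h_i$ and $h_{\lor}=\max_{i}h_i$ are continuous, their zero-superlevel sets are exactly the conjunction and disjunction sets, the hypotheses \eqref{eq:disjBF1}--\eqref{eq:disjBF2} are precisely the RCBF inequality \eqref{eq:BFinequality} with the admissible choice $\alpha(r)=\alpha r$ (convex, class $\mathcal{K}$, $\alpha(r)<r$ for $r>0$), and Theorem~\ref{thm:riskbf} then delivers $\rho$-safety of each composite set. This is evidently the argument the authors intend, and your remark that for a linear $\alpha$ positive homogeneity of $\rho$ suffices and the Jensen step is unnecessary matches the footnote in the paper's proof of Theorem~\ref{thm:riskbf}.
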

We next propose conditions for risk-sensitive finite-time reachability of sets composed of Boolean compositions of several functions $h$ as described in~\eqref{eq:safeset}.

\begin{proposition} \label{prop:booleanftBFs}
Let $\mathcal{S}_i = \{ x \in \mathbb{R}^n \mid h_i(x)\ge 0\}$, $i=1,\ldots,k$ denote a family of sets with the boundaries and interior defined analogous to $\mathcal{S}$ in~\eqref{eq:safeset} and $\rho$ be a given dynamic coherent risk measure. Consider the discrete-time system~\eqref{eq:dynamics}. If there exist constants  $0<\gamma<1$ and $\varepsilon > 0$ such that
\begin{equation}\label{eq:disjtbBF1}
 \rho\left(\min_{i=1,\ldots,k} h_i(x^{t+1})\right) -\gamma \min_{i=1,\ldots,k} h_i(x^{t}) \ge  \varepsilon (1-\gamma)
\end{equation}
then the set $\{ x \in \mathbb{R}^n \mid  \land_{i =1,\ldots,k} \left(h_i(x) \ge 0\right)\}$ is $\rho$-reachable. Then, there exists a constant $t^*$ satisfying
\begin{equation}\label{eq:disjtbBF1time}
        t^* \le {\log\left(\frac{\varepsilon - \min_{i=1,\ldots,k}{h}_i(x^0)}{\varepsilon}\right)}/{\log\left(\frac{1}{\gamma}\right)}, 
\end{equation}
such that if $x^0 \in \mathcal{X} \setminus \cup_{i=1,\ldots,k} \mathcal{S}_i$ then $x^{t^*} \in \cap_{i=1,\ldots,k} \mathcal{S}_i$.
Similarly, the disjunction
case follows by replacing $\min$ with $\max$ in \eqref{eq:disjtbBF1} and \eqref{eq:disjtbBF1time}.
\end{proposition}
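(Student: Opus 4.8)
The plan is to reduce both the conjunction and the disjunction cases to Theorem~\ref{thm:FTDTBF} by packaging the family $\{h_i\}$ into a single composite barrier function. For the conjunction, define $H(x) := \min_{i=1,\ldots,k} h_i(x)$; as the pointwise minimum of finitely many continuous functions, $H$ is continuous. With the boundaries and interiors of the $\mathcal{S}_i$ defined analogously to~\eqref{eq:safeset}, one has the set identity $\bigcap_{i} \mathcal{S}_i = \{x : H(x) \ge 0\}$, and since $H(x) > 0 \iff h_i(x) > 0$ for all $i$, the interior and boundary of $\bigcap_i \mathcal{S}_i$ also coincide with those of the zero super-level set of $H$. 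Hence $\bigcap_i \mathcal{S}_i$ is exactly a set of the form~\eqref{eq:safeset} with $h$ replaced by $H$.

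Next I would observe that hypothesis~\eqref{eq:disjtbBF1} is verbatim the finite-time RCBF inequality~\eqref{eq:BFft} for the function $H$ and the set $\bigcap_i \mathcal{S}_i$, with the same constants $0<\gamma<1$ and $\varepsilon>0$; therefore $H$ is a finite-time RCBF in the sense of Definition~\ref{def:ftdtbf}. Moreover, $x^0 \in \mathcal{X} \setminus \bigcup_i \mathcal{S}_i$ means $h_i(x^0) < 0$ for every $i$, so $H(x^0) = \min_i h_i(x^0) < 0$, i.e.\ $x^0 \in \mathcal{X} \setminus \{x : H(x)\ge 0\}$, which is precisely the initial-condition hypothesis of Theorem~\ref{thm:FTDTBF}.

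Applying Theorem~\ref{thm:FTDTBF} to $H$ then produces a $t^* \in \mathbb{N}_{\ge 0}$ with $\rho_{0,t^*}(0,\ldots,H(x^{t^*})) \ge 0$, which is exactly the statement that $\bigcap_i \mathcal{S}_i$ is $\rho$-reachable, together with the time bound~\eqref{eq:upperboundtheorem2} in which $h(x^0)$ is replaced by $H(x^0) = \min_i h_i(x^0)$; this is~\eqref{eq:disjtbBF1time}. At the first such $t^*$ the trajectory reaches the set at least at its boundary, where $H(x^{t^*}) = 0$ forces $h_i(x^{t^*}) \ge 0$ for every $i$, hence $x^{t^*} \in \bigcap_i \mathcal{S}_i$. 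The disjunction case is handled identically with $H$ replaced by $G(x) := \max_{i=1,\ldots,k} h_i(x)$, using $\bigcup_i \mathcal{S}_i = \{x : G(x)\ge 0\}$ and noting that~\eqref{eq:disjtbBF1} with $\max$ in place of $\min$ is the finite-time RCBF condition for $G$.

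There is no deep analytic obstacle here: the content of the proof is the single observation that a $\min$ or $\max$ of finitely many continuous functions is again a continuous function to which Definition~\ref{def:ftdtbf} and Theorem~\ref{thm:FTDTBF} apply directly, so the result is essentially a corollary. The only points demanding care are the (routine) remark that the nonsmoothness of $H$ and $G$ is harmless, since Definition~\ref{def:ftdtbf} and Theorem~\ref{thm:FTDTBF} require only continuity and not differentiability, and the bookkeeping that the composite super-level sets together with their boundaries and interiors genuinely coincide with the Boolean combinations $\bigcap_i \mathcal{S}_i$ and $\bigcup_i \mathcal{S}_i$, so that "$\rho$-reachable" in the conclusion refers to the intended Boolean set.
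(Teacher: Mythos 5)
The paper itself omits the proof of this proposition, but your reduction is correct and is evidently the intended argument: $H=\min_i h_i$ (resp.\ $G=\max_i h_i$) is continuous, $\cap_i\mathcal{S}_i=\{x: H(x)\ge 0\}$ (resp.\ $\cup_i\mathcal{S}_i=\{x: G(x)\ge 0\}$), hypothesis~\eqref{eq:disjtbBF1} is exactly the finite-time RCBF condition~\eqref{eq:BFft} for $H$ (resp.\ $G$), and $x^0\in\mathcal{X}\setminus\cup_i\mathcal{S}_i$ gives $H(x^0)<0$, so Theorem~\ref{thm:FTDTBF} delivers $\rho$-reachability and the bound~\eqref{eq:disjtbBF1time}. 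Your final step asserting $x^{t^*}\in\cap_i\mathcal{S}_i$ at the boundary inherits the same informal reasoning used at the end of the paper's proof of Theorem~\ref{thm:FTDTBF}, so it is at the paper's own level of rigor.
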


\section{Simulation Results} \label{sec:experiments}


In order to illustrate the results of these risk-aware guarantees, we apply our method in the case of the cart-pole, modeled as a nonlinear, control-affine discrete-time system. 
\begin{equation}
x^{t+1} = x^t +
\begin{bmatrix}
v_x \\
\dot{\theta} \\
\frac{u^t + m_p \sin\theta (l \dot\theta^2 + g\cos\theta)}{m_c + m_p \sin^2\theta}\\
\frac{ -u^t
       \cos\theta - m_p l \dot\theta^2 \cos\theta \sin\theta - (m_c + m_p) g \sin\theta }{l(m_c + m_p \sin^2\theta)} 
\end{bmatrix}
\Delta_t + w^t
\end{equation}
The disturbance $w^t \in W$ enters the system linearly, and is described by a pmf over the states. This could include the modeling error from this Euler-approximated discrete-time model, but in this case, it is a simple pmf normally distributed around $0$ with standard deviation $\sigma = \{0.05,0.05, 0.2, 0.2\}$ for the four states $x = \left[ p_x,\theta, v_x,\dot{\theta}\right]$.

The safety set is described by
\begin{equation}
    h(x^t) = -2a_{\max}p_x^t - {v_x^t}^2\textrm{sgn}(v_x^t),
\end{equation}
where $a_{\max} > 0$ is a tuneable parameter that designates the maximum linear acceleration at any point. This function is positive when $p_x < 0$, but allows $h(x^t) > 0$ when $p_x > 0$ if $v_x$ is sufficiently negative.

While this safety set is nonlinear in the control inputs, the one-step nature of this optimization problem results in no issues solving such a program in real-time, using modern solvers such as IPOPT or NLOPT. In future work, we plan to show how nonlinear CBFs can be linearized to result in an affine RCBF constraint, with the error included in the stochastic uncertainty to result in formal safety guarantees.

The RCBF was solved using PAGMO's integrated SLSQP solver from NLOPT. Each solution took roughly 0.7 ms to compute on a modern laptop, resulting in a maximum control frequency of 1428 Hz. Three trajectories are shown in Figure \ref{fig:simfig}. The desired trajectory shows the trajectory with only the nominal controller, which clearly surpasses the safe set at $x = 0$. The trajectory corresponding to $\mathbb{E}[h]$ was filtered subject to the total conditional expectation coherent risk measure, which also corresponds to CVaR with $\beta = 1$. While this filter guarantees safety in the expectation, safety is frequently violated due to the stochastic uncertainty. Finally, the trajectory corresponding to CVaR with $\beta = 0.01$ results in safety over the entire trajectory.

Similarly, Figure \ref{fig:simfig} also demonstrates the same three trajectories with the finite-time reachability RCBF. Specifically, we utilize constants $\gamma = 0.05$ and $\epsilon = 0.1$, with an initial safety violation of $h(x^0) = -0.2$. From \eqref{eq:upperboundtheorem2}, this suggests a $t^*\leq 0.3667$s. While this is not reflected in the plot, which only shows $p_x^t$ rather that $h(x^t)$, we find that $h(x^{t^*}) > 0$ at $t^* = 0.08$s, well below the theoretical guarantee. 

\section{Conclusions}

In this paper, we propose Risk Control Barrier Functions (RCBFs) as a means to enforce safety in the presence of stochastic uncertainty. We demonstrate theoretically that these RCBFs guarantee safety with respect to dynamic coherent risk measures, which serve as a computationally efficient means to assess risk. Moreover, we proved that finite-time RCBFs can be utilized to guarantee convergence to a set in finite time, resulting in a practical safety filter that works both inside and outside of the safe set. We also demonstrated how multiple safe sets can be enforced simultaneously utilizing Boolean compositions. Finally, we demonstrated the efficacy of this framework on the nonlinear cart-pole system under stochastic uncertainty. 

\footnotesize{
\bibliography{references}

\begin{thebibliography}{10}

\bibitem{thrun2005probabilistic}
S.~Thrun, W.~Burgard, and D.~Fox, {\em Probabilistic robotics}.
\newblock Cambridge, Mass.: MIT Press, 2005.

\bibitem{reher2020dynamic}
J.~Reher and A.~D. Ames, ``Dynamic walking: Toward agile and efficient bipedal
  robots,'' {\em Annual Reviews}, 2020.

\bibitem{rouvcek2019darpa}
T.~Rou{\v{c}}ek, M.~Pecka, P.~{\v{C}}{\'\i}{\v{z}}ek,
  T.~Pet{\v{r}}{\'\i}{\v{c}}ek, J.~Bayer, V.~{\v{S}}alansk{\`y}, D.~He{\v{r}}t,
  M.~Petrl{\'\i}k, T.~B{\'a}{\v{c}}a, V.~Spurn{\`y}, {\em et~al.}, ``Darpa
  subterranean challenge: Multi-robotic exploration of underground
  environments,'' in {\em International Conference on Modelling and Simulation
  for Autonomous Systesm}, pp.~274--290, Springer, 2019.

\bibitem{ono2015chance}
M.~Ono, M.~Pavone, Y.~Kuwata, and J.~Balaram, ``Chance-constrained dynamic
  programming with application to risk-aware robotic space exploration,'' {\em
  Autonomous Robots}, vol.~39, no.~4, pp.~555--571, 2015.

\bibitem{wang2020non}
A.~Wang, A.~M. Jasour, and B.~Williams, ``Non-gaussian chance-constrained
  trajectory planning for autonomous vehicles under agent uncertainty,'' {\em
  IEEE Robotics and Automation Letters}, 2020.

\bibitem{koenig1994risk}
S.~Koenig and R.~G. Simmons, ``Risk-sensitive planning with probabilistic
  decision graphs,'' in {\em Principles of Knowledge Representation and
  Reasoning}, pp.~363--373, Elsevier, 1994.

\bibitem{xu2010distributionally}
H.~Xu and S.~Mannor, ``{Distributionally robust Markov decision processes},''
  in {\em Advances in Neural Information Processing Systems}, pp.~2505--2513,
  2010.

\bibitem{majumdar2020should}
A.~Majumdar and M.~Pavone, ``How should a robot assess risk? towards an
  axiomatic theory of risk in robotics,'' in {\em Robotics Research},
  pp.~75--84, Springer, 2020.

\bibitem{artzner1999coherent}
P.~Artzner, F.~Delbaen, J.~Eber, and D.~Heath, ``Coherent measures of risk,''
  {\em Mathematical finance}, vol.~9, no.~3, pp.~203--228, 1999.

\bibitem{chow2015risk}
Y.~Chow, A.~Tamar, S.~Mannor, and M.~Pavone, ``{Risk-sensitive and robust
  decision-making: a cvar optimization approach},'' in {\em Advances in Neural
  Information Processing Systems}, pp.~1522--1530, 2015.

\bibitem{chow2014algorithms}
Y.~Chow and M.~Ghavamzadeh, ``Algorithms for cvar optimization in mdps,'' in
  {\em Advances in neural information processing systems}, pp.~3509--3517,
  2014.

\bibitem{prashanth2014policy}
L.~Prashanth, ``Policy gradients for cvar-constrained mdps,'' in {\em
  International Conference on Algorithmic Learning Theory}, pp.~155--169,
  Springer, 2014.

\bibitem{bauerle2011markov}
N.~B{\"a}uerle and J.~Ott, ``Markov decision processes with
  average-value-at-risk criteria,'' {\em Mathematical Methods of Operations
  Research}, vol.~74, no.~3, pp.~361--379, 2011.

\bibitem{singh2018framework}
S.~Singh, Y.~Chow, A.~Majumdar, and M.~Pavone, ``A framework for
  time-consistent, risk-sensitive model predictive control: Theory and
  algorithms,'' {\em IEEE Transactions on Automatic Control}, 2018.

\bibitem{chapman2019risk}
M.~P. Chapman, J.~Lacotte, A.~Tamar, D.~Lee, K.~M. Smith, V.~Cheng, J.~F.
  Fisac, S.~Jha, M.~Pavone, and C.~J. Tomlin, ``A risk-sensitive finite-time
  reachability approach for safety of stochastic dynamic systems,'' in {\em
  2019 American Control Conference (ACC)}, pp.~2958--2963, IEEE, 2019.

\bibitem{ames2016control}
A.~D. Ames, X.~Xu, J.~W. Grizzle, and P.~Tabuada, ``Control barrier function
  based quadratic programs for safety critical systems,'' {\em IEEE
  Transactions on Automatic Control}, vol.~62, no.~8, pp.~3861--3876, 2016.

\bibitem{nguyen20163d}
Q.~Nguyen, A.~Hereid, J.~W. Grizzle, A.~D. Ames, and K.~Sreenath, ``{3D}
  dynamic walking on stepping stones with control barrier functions,'' in {\em
  Decision and Control (CDC), 2016 IEEE 55th Conference on}, pp.~827--834,
  IEEE, 2016.

\bibitem{chen2019enhancing}
Y.~Chen, A.~Hereid, H.~Peng, and J.~Grizzle, ``Enhancing the performance of a
  safe controller via supervised learning for truck lateral control,'' {\em
  Journal of Dynamic Systems, Measurement, and Control}, vol.~141, no.~10,
  2019.

\bibitem{xu2015robustness}
X.~Xu, P.~Tabuada, J.~W. Grizzle, and A.~D. Ames, ``Robustness of control
  barrier functions for safety critical control,'' {\em IFAC-PapersOnLine},
  vol.~48, no.~27, pp.~54--61, 2015.

\bibitem{kolathaya2018input}
S.~Kolathaya and A.~D. Ames, ``Input-to-state safety with control barrier
  functions,'' {\em IEEE control systems letters}, vol.~3, no.~1, pp.~108--113,
  2018.

\bibitem{ames2019control}
A.~D. Ames, S.~Coogan, M.~Egerstedt, G.~Notomista, K.~Sreenath, and P.~Tabuada,
  ``Control barrier functions: Theory and applications,'' in {\em 2019 18th
  European Control Conference (ECC)}, pp.~3420--3431, IEEE, 2019.

\bibitem{ahmadi2019safe}
M.~Ahmadi, A.~Singletary, J.~W. Burdick, and A.~D. Ames, ``Safe policy
  synthesis in multi-agent pomdps via discrete-time barrier functions,'' in
  {\em 2019 IEEE 58th Conference on Decision and Control (CDC)},
  pp.~4797--4803, IEEE, 2019.

\bibitem{agrawal2017discrete}
A.~Agrawal and K.~Sreenath, ``Discrete control barrier functions for
  safety-critical control of discrete systems with application to bipedal robot
  navigation.,'' in {\em Robotics: Science and Systems}, 2017.

\bibitem{ahmadi2020barrier}
M.~Ahmadi, A.~Singletary, J.~W. Burdick, and A.~D. Ames, ``Barrier functions
  for multiagent-pomdps with dtl specifications,'' in {\em The 59th IEEE
  Conference on Decision and Control}, 2020.

\bibitem{clark2019control}
A.~Clark, ``Control barrier functions for complete and incomplete information
  stochastic systems,'' in {\em 2019 American Control Conference (ACC)},
  pp.~2928--2935, IEEE, 2019.

\bibitem{santoyo2019barrier}
C.~Santoyo, M.~Dutreix, and S.~Coogan, ``A barrier function approach to
  finite-time stochastic system verification and control,'' {\em arXiv preprint
  arXiv:1909.05109}, 2019.

\bibitem{ruszczynski2010risk}
A.~Ruszczy{\'n}ski, ``Risk-averse dynamic programming for markov decision
  processes,'' {\em Mathematical programming}, vol.~125, no.~2, pp.~235--261,
  2010.

\bibitem{rockafellar2000optimization}
R.~T. Rockafellar, S.~Uryasev, {\em et~al.}, ``Optimization of conditional
  value-at-risk,'' {\em Journal of risk}, vol.~2, pp.~21--42, 2000.

\bibitem{rockafellar2002conditional}
R.~T. Rockafellar and S.~Uryasev, ``Conditional value-at-risk for general loss
  distributions,'' {\em Journal of banking \& finance}, vol.~26, no.~7,
  pp.~1443--1471, 2002.

\bibitem{zhao2019brief}
P.~Zhao, Y.~Kang, and Y.-B. Zhao, ``A brief tutorial and survey on markovian
  jump systems: Stability and control,'' {\em IEEE Systems, Man, and
  Cybernetics Magazine}, vol.~5, no.~2, pp.~37--C3, 2019.

\bibitem{chen2013risk}
Z.~Chen, K.~He, R.~Kulperger, {\em et~al.}, ``Risk measures and nonlinear
  expectations,'' {\em Journal of Mathematical Finance}, vol.~3, no.~03,
  p.~383, 2013.

\bibitem{glotfelter2017nonsmooth}
P.~Glotfelter, J.~Cort{\'e}s, and M.~Egerstedt, ``Nonsmooth barrier functions
  with applications to multi-robot systems,'' {\em IEEE control systems
  letters}, vol.~1, no.~2, pp.~310--315, 2017.

\bibitem{ahmadi2020safe}
M.~Ahmadi, A.~Israel, and U.~Topcu, ``Safe controller synthesis for data-driven
  differential inclusions,'' {\em IEEE Transactions on Automatic Control},
  2020.

\end{thebibliography}
}
\bibliographystyle{ieeetr}

\end{document}